\documentclass{article}

\usepackage[preprint]{neurips_2026}

\usepackage[utf8]{inputenc} 
\usepackage[T1]{fontenc}    
\usepackage{hyperref}       
\usepackage{url}            
\usepackage{booktabs}       
\usepackage{amsfonts}       
\usepackage{nicefrac}       
\usepackage{microtype}      
\usepackage{xcolor}         

\usepackage{macro}
\usepackage{tikz}
\usetikzlibrary{positioning, arrows.meta, fit, backgrounds}
\usepackage{graphicx}
\title{\sys: Text-to-Code with Guarantees}

\author{
Haoze Wu$^{1,2}$, Rocky Klopfenstein$^{1}$,
Keith Farkas$^{2}$, Nina Narodytska$^{2}$
\\
{\small
$^{1}$Amherst College \quad
$^{2}$Broadcom
}
\\
{\small
\texttt{\{hwu,rklopfenstein27\}@amherst.edu}
}
\\
{\small
\texttt{\{keith.farkas,nina.narodytska\}@broadcom.com}
}
}

\begin{document}

\maketitle

\begin{abstract}
A fundamental limitation of \texttocode is that no guarantee can be obtained about the correctness of the generated code. Therefore, to ensure its correctness, the generated code still has to be reviewed, tested, and maintained by developers. However, parsing through LLM-generated code can be tedious and time-consuming, potentially negating the productivity gains promised by AI-coding tools. To address this challenge, we present \sys, a system that automatically produces \emph{formally verified annotations} alongside generated code to aid user's understanding of the generated program. Given a natural-language task description, \sys prompts an LLM to synthesize a C program together with candidate assertions expressing safety and correctness properties. It then verifies those assertions in a compositional and best-effort manner via a portfolio of bounded model checkers. Evaluation on 18 diverse programming tasks suggests that \sys can efficiently generate code with verified assertions, and that these assertions improve users' performance on code-comprehension tasks in a user study with more than 400 participants.
\end{abstract}

\section{Introduction}\label{sec:intro}

\texttocode based on Large Language Models (LLMs) can significantly improve software development productivity. However, LLM-generated code is opaque: subtle bugs, incorrect logic, and violated assumptions can all slip through unnoticed. In practice, developers must manually review or rewrite LLM-produced programs before they can trust them, an overhead that risks negating the productivity gains promised by AI coding tools.
Since \texttocode fundamentally cannot guarantee that the generated code correctly implements the natural language specification at any level of detail, human involvement is still unavoidable. We therefore argue that a key bottleneck in \texttocode is that developers lack efficient ways to check the generated code. In this paper, we ask the following question: \emph{How can we help the developer to understand and analyze AI-generated code?}

A standard answer to this question is to perform thorough \emph{testing}. However, generated code can be large, the logic can be non-trivial, and tests only witness correctness on a finite number of inputs; they cannot reliably characterize general program behavior across a wide range of execution traces. On the other hand, full formal verification, e.g., through proof assistants such as Lean or Rocq, has the potential to provide the user with absolute guarantees about the behavior of the program. However, requiring a complete formal proof for every generated program is impractical: even with the aid of LLMs, program verification in proof assistants can still be time-consuming and challenging~\citep{first2023baldur,jiang2022draft}. Therefore, there is room for practical methods between these two extremes, that is, approaches that provide stronger guarantees than testing without requiring full formal verification.
To explore this gap, we propose \sys, a novel \texttocode framework. The goal of \sys is to provide developers with proven statements about program behavior while keeping the cost of obtaining that information practical.

The design of \sys makes several important compromises to achieve this goal. First, it puts forward the idea of \textbf{partial co-specification}: co-generating code together with properties of the code in the form of assertions. These properties capture requirements from the natural language description of the program functionality. Usually, they express safety and correctness properties of the code. These assertions serve simultaneously as partial specifications and as targets for formal verification.

Second, we carefully design a \textbf{best-effort verification} procedure. We propose using bounded program verification to check these assertions in a best-effort manner: we do not require every assertion to be fully verified, making our verification incomplete by design. The key insight is that even incomplete verification is valuable: an assertion confirmed to hold on all executions within a bounded depth is a formally grounded fact that the developer can rely on, helping them understand the properties of the code and narrowing what must be reviewed manually.

To test the feasibility of \sys, we perform an evaluation on 18 practical Text-to-C-Code tasks that goes beyond simple benchmarks seen in canonical datasets such as HumanEval~\citep{chen2021evaluating} and MBPP~\citep{austin2021program}. We found that \sys can efficiently generate complex programs with multiple verified assertions, and a further user study suggests those verified facts are meaningful enough to boost program comprehension in practice.

To summarize, our contributions include the following:

\begin{itemize}
    \item We introduce \sys, an end-to-end \texttocode workflow that synthesizes C programs from natural-language descriptions and automatically produces formally verified annotations.

    \item We propose the idea of \emph{partial co-specification}: the LLM co-generates both a program and partial correctness properties from the same natural-language task description, turning those properties into assertions that serve as partial specifications.

    \item We design a \emph{best-effort} compositional verification procedure that uses call-graph structure and previously verified assertions as assumptions, allowing bounded model checking to produce both unconditional and conditional guarantees.

    \item We prove that every verified property is sound within the stated loop-unwind bound, and that conditional facts are sound under their recorded dependency assumptions.

    \item We evaluate \sys on 18 diverse \texttocode benchmarks and show that it can generate and verify meaningful annotations efficiently; a user study further shows that these verified properties improve code comprehension compared to unannotated generated programs.
\end{itemize}

\section{Preliminaries}\label{sec:prelim}

We consider the following setting.
A developer provides a natural-language task description $D$.
An LLM produces a 
program $\prog$ intended to implement $D$.
Our goal is to automatically produce a set of \emph{verified facts}, formally grounded
statements about the behavior of $\prog$, that help the developer understand and
audit the generated code, without requiring the developer to write any specification.
In this section, we introduce the key concepts of \sys.

\begin{definition}[Program/Call graph]\label{def:program}
A \emph{program} $\prog$ consists of global code (headers, macros, type definitions) and
a finite set of functions $\mathcal{F}(\prog)$.
The \emph{call graph} $\callgraph{\prog}$ is a directed graph over $\mathcal{F}(\prog)$
where an edge $(f \to g)$ exists if $f$ calls $g$, annotated with the logical line
of the call site in $f$.
\end{definition}

\begin{definition}[Assertion]\label{def:assertion}
An \emph{assertion} is a triple $a = \tuple{f,\, l,\, \phi}$ where
$f \in \mathcal{F}(\prog)$ is a function name, $l \in \mathbb{N}$ is a \emph{line number} in $f$, a position between consecutive non-assertion statements, and $\phi$ is a Boolean predicate over program variables in scope at $l$.
We write $\assertions{\prog}$ for the set of all assertions in $\prog$.
The line numbers are counted over the assertion-free skeleton of $f$, so $l$ is stable under insertion or removal of assertions.
\end{definition}

We adopt the loop-unwinding semantics standard in bounded model checking~\citep{clarke2004cbmc}.
\begin{definition}[$k$-Bounded Execution]\label{def:kexec}
A \emph{$k$-bounded execution} of a program $\prog$ is any execution in which
every loop back-edge is traversed at most $k$ times; the model checker
asserts \texttt{false} at the $(k{+}1)$-th iteration to detect
violations of the bound.
We write $\execk{\prog}{k}$ for the set of all $k$-bounded executions of $\prog$.
\end{definition}

\begin{definition}[Bounded Invariant]\label{def:binv}
Given a program $\prog$, an assertion $a = \tuple{f, l, \phi}$, and a bound
$k \in \mathbb{N}$, we say $a$ is a \emph{$k$-bounded invariant} of $\prog$,
written $\binv{\prog}{k} \ni a$, if $\phi$ evaluates to \texttt{true} at line $l$
of $f$ on every execution in $\execk{\prog}{k}$.
\end{definition}

Bounded invariants are decidable and can be checked with a Bounded Model Checker via
Definition~\ref{def:kexec}; full invariance over all executions is undecidable
in general. A $k$-bounded invariant provides a formal guarantee for all
executions within the stated bound.

\begin{definition}[Assumption]\label{def:asm}
Given a program $\prog$ and a set of assertions
$S = \{\tuple{f_1,l_1,\phi_1}, \ldots, \tuple{f_n,l_n,\phi_n}\}$,
$\asm{\prog}{S}$ denotes the program obtained by treating each assertion in $S$
as an assumption,
which renders any execution on which $\phi_i$ is \texttt{false}
\emph{infeasible}, pruning it from the model checker's search space.
Formally:
\[
  \execk{\asm{\prog}{S}}{k}
  \;=\;
  \bigl\{\,e \in \execk{\prog}{k} \;\bigm|\;
    \forall\, \tuple{f_i,l_i,\phi_i} \in S,\;
    \phi_i \text{ holds at line } l_i \text{ of } f_i \text{ in } e
  \,\bigr\}.
\]
\end{definition}

An assumption restricts the set of executions considered by the model
checker to those consistent with $S$: only executions in which all $\phi_i$
hold are explored, without triggering assertion failures for those locations.

\begin{definition}[Verification Status]\label{def:status}
Given a program $\prog$, an ordered assertion sequence
$\mathit{t} = [a_1, \ldots, a_m]$, and a bound $k$, the
\emph{verification status} of $a_i$ is one of:
\begin{itemize}
  \item $\verified$: $a_i$ is a $k$-bounded invariant of $\prog$.
  \item $\cverified(S)$: $a_i$ is a $k$-bounded invariant of $\asm{\prog}{S}$,
        where $S$ is a set of assertions.
  \item $\falsified$: the model checker finds a $k$-bounded counterexample to $a_i$
        even under $S$.
  \item $\unknown$: verification times out.
\end{itemize}
\end{definition}

\section{Methodology}\label{sec:methodology}


\begin{figure}[t!]
\centering
\resizebox{\linewidth}{!}{
\begin{tikzpicture}[
    font=\scriptsize,
    node distance=0.55cm and 0.55cm,
    box/.style={
        rectangle,
        rounded corners,
        draw,
        thick,
        align=center,
        minimum height=0.78cm,
        minimum width=1.95cm,
        inner sep=2pt,
        fill=gray!8
    },
    llm/.style={
        box,
        fill=blue!8,
        inner sep=0pt
    },
    verify/.style={
        box,
        fill=orange!12
    },
    output/.style={
        box,
        fill=green!10
    },
    data/.style={
        rectangle,
        draw,
        thick,
        align=center,
        minimum height=0.72cm,
        minimum width=1.5cm,
        inner sep=1pt,
        fill=gray!10
    },
    group/.style={
        rectangle,
        rounded corners,
        draw,
        thick,
        dashed,
        inner sep=0.28cm,
        fill opacity=0.18,
        text opacity=1
    },
    arrow/.style={-{Stealth[length=2mm,width=1.4mm]}, thick},
    dashedarrow/.style={-{Stealth[length=2mm,width=1.4mm]}, thick, dashed}
]

\node[data] (desc) {NL task\\description $D$};

\node[llm, right=0.75cm of desc, yshift=1.3cm] (props) {
Property\\elicitation 
$\Pi$
};

\node[llm, right=1.15cm of desc] (annotate) {
Annotated\\program 
$P^+$
};

\node[llm, right=0.75cm of desc, yshift=-1.3cm] (synth) {
Program\\synthesis
$P_0$
};

\node[llm, right=0.72cm of annotate] (bound) {
Bound\\reduction 
$P^+_k$
};

\node[verify, right=0.65cm of bound, yshift=0.85cm] (cg) {
Call-graph\\traversal
};

\node[verify, right=0.65cm of cg, yshift=-0.85cm] (verifyall) {
Compositional\\verification
};

\node[data, right=0.35cm of verifyall] (statuses) {
Assertion\\statuses
};

\node[llm, right=1.2cm of statuses] (translate) {
Fact\\translation
};

\node[output, right=0.25cm of translate] (final) {
Code with\\verified facts
};

\node[data, above=0.75cm of final] (dev) {
Developer\\review
};

\begin{pgfonlayer}{background}
\node[
    group,
    fill=blue!20,
    fit=(props) (synth) (annotate),
    label={[font=\scriptsize\bfseries]above:Partial co-specification}
] (partialbox) {};

\node[
    group,
    fill=orange!25,
    fit=(bound) (cg) (verifyall) (statuses),
    label={[font=\scriptsize\bfseries]above:Best-effort verification}
] (verifybox) {};
\end{pgfonlayer}

\draw[arrow] (desc) -- (props);
\draw[arrow] (desc) -- (synth);

\draw[arrow] (props) -- (annotate);
\draw[arrow] (synth) -- (annotate);

\draw[arrow] (annotate) -- (bound);

\draw[arrow] (bound) -- (cg);
\draw[arrow] (cg) -- (verifyall);
\draw[arrow] (bound) -- (verifyall);

\draw[arrow] (verifyall) -- (statuses);

\draw[arrow] (statuses) -- node[above, align=center, font=\tiny] {
verified only
} (translate);

\draw[arrow] (translate) -- (final);
\draw[arrow] (final) -- (dev);

\draw[dashedarrow] (statuses.south) .. controls +(0,-0.85) and +(-1.2,-1.25) ..
    node[below, align=center, font=\tiny] {
    bounds / conditions
    } (final.south west);

\end{tikzpicture}
}
\caption{
Overview of \sys. 
}
\label{fig:overview}
\vspace{-0.2cm}
\end{figure}
Figure~\ref{fig:overview} shows an overview of the \sys workflow. 
In a nutshell, \sys takes a natural-language description as input and generates a program annotated with a set of formally verified facts, each tagged with the conditions and bound under which it was established. 
These additional facts make the task of analyzing and validating generated code easier for the developer, as our user study confirms. 
Next, we describe the workflow in detail.

\subsection{The \sys Pipeline}

Algorithm~\ref{alg:pipeline} describes the \sys workflow, and Figure~\ref{fig:overview} visualizes its main steps. 
The figure also highlights how two central ideas underlying the workflow, \emph{partial co-specification} and \emph{best-effort verification}, are implemented in the framework.
\sys starts by generating properties $\Pi$ (line~\ref{line:elicit}) and synthesizing the initial program $\prog_0$ (line~\ref{line:syn}) from $D$. 
The goal of the \textit{property elicitation} step is to generate safety and correctness properties that the program should satisfy based on the user description. 
As a result, it produces a set of properties
$\Pi = \{\pi_1, \ldots, \pi_n\}$ in natural language.
Figure~\ref{fig:cospec-example} shows an example of such a set for an excerpt from the `Bubble Sort' task: $\pi_1$ and $\pi_2$ are two properties for this task formulated in natural language.
These serve as a conceptual checklist for assertion generation. 
Next, \sys generates a C program $\prog_0$. 
Here, we instruct the LLM to perform \textit{verification-friendly synthesis}, e.g., to avoid recursion, with verifiability as a design constraint (see Appendix~\ref{app:prompt-synth} for the prompt).
Given $\prog_0$ and $\Pi$, \sys performs \textit{assertion generation} (line~\ref{line:annotate}).
It annotates $\prog_0$ with assertions that are expressed as \texttt{assert}
statements in C code at appropriate locations throughout the program, producing $\prog^+$.
Assertions may appear at function boundaries (preconditions at entry, postconditions
before returns), at loop invariant points, or wherever a property from $\Pi$ is most
naturally expressed.
Helper functions returning \texttt{bool} may be introduced for complex conditions.
Figure~\ref{fig:cospec-example} shows $\prog^+$ and examples of \texttt{assert} statements $a_1$ and $a_2$ that correspond to properties $\pi_1$ and $\pi_2$, where $a_2$ asserts a helper function.

\begin{algorithm}[t]
\scriptsize
\caption{The \sys Pipeline}\label{alg:pipeline}
\begin{algorithmic}[1]
\Require Natural-language description $D$, loop-unwind bound $k$
\Ensure Program $\prog_0$ annotated with verified facts
\State $\Pi \gets \Call{PropertyElicit}{D}$\label{line:elicit} \Comment{LLM: NL safety/correctness properties}
\State $\prog_0 \gets \Call{Synthesize}{D}$ \label{line:syn}\Comment{LLM: verification-friendly C program}
\State $\prog^+ \gets \Call{Annotate}{\prog_0,\, \Pi}$ \label{line:annotate}\Comment{LLM: embed \texttt{assert()} statements}
\State $\prog^+_k \gets \Call{BoundReduce}{\prog^+,\, k}$\label{line:reduce} \Comment{LLM: tighten loop-bound constants so loops iterate $\leq k$ times}
\State $\mathit{t} \gets \Call{CGTraversal}
{\prog^+_k}$ \label{line:call}\Comment{order assertions by call graph}
\State $(\mathit{IG},\, F) \gets \Call{VerifyAll}{\prog^+_k,\, \mathit{t},\, k}$\label{line:loop}
\State $M \gets \Call{MapToProperties}{\prog^+_k,\, \Pi}$ \Comment{LLM: assertion $\to$ property}
\State \Return $\Call{EmbedFacts}{\prog_0,\, \mathit{IG},\, M}$ \Comment{translate to NL facts in code}
\end{algorithmic}
\end{algorithm}

\begin{figure}[t!]
\centering
\resizebox{\linewidth}{!}{
\begin{tikzpicture}[
    font=\scriptsize,
    node distance=0.75cm and 0.65cm,
    box/.style={
        rectangle,
        rounded corners,
        draw,
        thick,
        align=left,
        inner sep=5pt
    },
    taskbox/.style={
        box,
        fill=white,
        text width=3.0cm
    },
    propbox/.style={
        box,
        fill=blue!8,
        text width=4.4cm
    },
    codebox/.style={
        box,
        fill=green!8,
        text width=5.6cm
    },
    arrow/.style={-{Stealth[length=2mm,width=1.4mm]}, thick},
    dashedarrow/.style={->, thick, dashed}
]

\node[taskbox] (task) {
\textbf{Task description $D$}\\[2pt]
``Write a program that implements the Bubble Sort algorithm to sort an array of integers in ascending order.''
};

\node[propbox, right=0.45cm of task,  yshift=2.0cm] (props) {
\textbf{Property elicitation $\Pi$}\\[3pt]
$\textcolor{red}{\pi_1}$: The global swap counter equals the total number of adjacent swaps.

$\textcolor{red}{\pi_2}$: After all passes complete, the array is sorted in non-decreasing order:
$
\forall k.\ 0 \le k \land k+1 < n \Rightarrow arr[k] \le arr[k+1].
$
};

\node[propbox, right=0.45cm of task,  yshift=-1.1cm] (syn) {
\textbf{Program synthesis $P_0$}\\[3pt]
\begin{minipage}{5.85cm}
\ttfamily
void bubble\_sort(int arr[], int n,\\
\hspace*{1.8em}int *total\_swaps) \{\\
\hspace*{1.2em}int i, j, swapped;\\
\hspace*{1.2em}int local\_total\_swaps = 0;\\[2pt]
\hspace*{1.2em}for (i = 0; i < n - 1; i++) \{\\
\hspace*{4.8em}...\\
\hspace*{1.2em}\}\\
\hspace*{1.2em}*total\_swaps = local\_total\_swaps;\\
\}
\end{minipage}
};

\node[codebox, right=5.65cm of task,  yshift=0.2cm] (code) {
\textbf{Annotated program $P^+$}\\[0pt]
\begin{minipage}{6.5cm}
\ttfamily
void bubble\_sort(int arr[], int n,\\
\hspace*{1.8em}int *total\_swaps) \{\\
\hspace*{1.2em}int i, j, swapped;\\
\hspace*{1.2em}int local\_total\_swaps = 0;\\[2pt]
\hspace*{1.2em}for (i = 0; i < n - 1; i++) \{\\
\hspace*{2.4em}swapped = 0;\\
\hspace*{2.4em}for (j = 0; j < n - 1 - i; j++) \{\\
\hspace*{3.6em}if (arr[j] > arr[j + 1]) \{\\
\hspace*{4.8em}int tmp = arr[j];\\
\hspace*{4.8em}arr[j] = arr[j + 1];\\
\hspace*{4.8em}arr[j + 1] = tmp;\\
\hspace*{4.8em}local\_total\_swaps++;\\
\hspace*{4.8em}swapped = 1;\}\\
\hspace*{2.4em}\}\\
\hspace*{2.4em}if (!swapped) break;\}\\
\hspace*{1.2em}*total\_swaps = local\_total\_swaps;\\
\hspace*{-.5em} $\textcolor{red}{a_1}$ assert(*total\_swaps == local\_total\_swaps);\\
\hspace*{-.3em}$\textcolor{red}{a_2}$ assert(is\_sorted\_non\_dec(arr, n));\\
\}
\end{minipage}
};

\draw[arrow]
    (task.east) --
    node[above, align=center, font=\scriptsize] {}
    (props.west);

\draw[arrow]
    (task.east) --
    node[above, align=center, font=\scriptsize] {}
    ([yshift=-0.35cm]syn.west);

\draw[arrow]
    (props.east) --
    node[above, align=center, font=\scriptsize] {}
    ([yshift=0.35cm]code.west);

\draw[arrow]
    (syn.east) --
    node[above, align=center, font=\scriptsize] {}
    ([yshift=-0.15cm]code.west);


\end{tikzpicture}
}
\caption{
Example of partial co-specification for a function in the `Bubble Sort' program. 
From the same task description $D$, \sys co-generates candidate properties $\Pi$ and an initial program $P_0$. 
The properties are then inserted into the program as assertions, producing $P^+$. 
The red labels show how each elicited property is linked to the corresponding generated assertion.
}
\label{fig:cospec-example}
\vspace{-0.25cm}
\end{figure}

The next step is to perform best-effort verification of the generated assertions in $\prog^+$. 
We perform it in two phases. 
First, we perform a preprocessing step, \textit{bound reduction}, at line~\ref{line:reduce}. 
The goal is to make $\prog^+_k$ \emph{loop-bound tight}: no loop should reach its $(k{+}1)$-th iteration on any input to $\prog^+_k$.
As noted in Definition~\ref{def:kexec}, the model checker asserts \texttt{false}
at the $(k{+}1)$-th iteration of every loop; if any loop can legitimately run
beyond $k$ iterations, this unwinding assertion is violated, causing a spurious failure unrelated to the program's actual properties.
In practice, the LLM reduces compile-time loop-bound constants so that every
loop iterates at most $k$ times. 
Next, we perform \textit{compositional verification}, which we discuss in detail in the next section.

The final step is to compile facts for the developer, \textit{fact generation}, and present them to the user; this step is described in Section~\ref{sec:facts}.

\subsection{Compositional Verification}\label{sec:verification}

\sys has to verify programs with multiple annotations across different functions.
Therefore, we employ a compositional verification procedure that accounts for and leverages these interdependencies. 
To retrieve such dependencies, we first perform an analysis of the program and construct a call graph (line~\ref{line:call}). 
$\Call{CGTraversal}{}$ produces a linear ordering of $\assertions{\prog^+_k}$ by
performing a depth-first traversal of $\callgraph{\prog^+_k}$ from \texttt{main},
with cycle avoidance. 
For each function $f$, its assertions are collected in logical-line order.
When $f$ calls $g$ at logical line $l_c$, the traversal of $g$ is inserted
into $f$'s assertion list immediately before $f$'s assertions at lines $\geq l_c$.
By construction, for every call edge $(f \to g)$ at line $l_c$, all assertions
of $g$ precede all assertions of $f$ at lines $\geq l_c$ in $\mathit{t}$.
This ensures that callee postconditions are available as assumptions when
verifying caller assertions that follow the call.
Given $\mathit{t}$, we perform the main verification loop in line~\ref{line:loop}, outlined in Algorithm~\ref{alg:verify}.

Algorithm~\ref{alg:verify} processes assertions $a_i$ one by one in sequence order. 
The algorithm maintains an implication graph $\mathit{IG}$ that stores logical dependencies between the $i$th assertion and a preceding assertions in the order. 
We also keep track of falsified assertions in $F$. 
For each $a_i$, Algorithm~\ref{alg:verify}  attempts a \emph{standalone} check (line~\ref{line:alonecheck}): does $a_i$ hold with no additional assumptions?
We denote by $\verifier_k(\prog, \emptyset, a_i)$ a call to the automated reasoning tool that runs the bounded model checker on $\prog$ with $a_i$ as the target assertion and loop-unwind bound $k$. 
It returns one of $\{\mathsf{Verified}, \mathsf{Falsified}, \mathsf{Unknown}\}$~\footnote{In practice, $\unknown$ indicates that the solver exhausted its time budget without reaching a conclusion; such assertions are excluded from the output.}. 
If the check succeeds, we obtain a verified assertion with no dependencies. 
Otherwise, the algorithm attempts a \emph{compositional} check (line~\ref{line:composecheck}): does $a_i$ hold when all preceding assertions $S = \{a_j : j < i\}$ are promoted to \texttt{assume()} statements? 
Here, $\asm{\prog}{S}$ denotes the program obtained by treating each assertion in $S$ as an assumption. 
If this check succeeds, we obtain a verified assertion under the assumption that the assertions in $S$ hold. 
Otherwise, we mark it as failed and add it to $F$ (line~\ref{line:failedass}).
Promoting earlier assertions to assumptions is sound because they represent
facts about earlier program points that, if verified themselves, are guaranteed to hold. 
Next, we prove properties of \textsc{VerifyAll}. We denote $\mathtt{index}(\mathit{IG})$
the set of all keys in the dictionary structure $\mathit{IG}$.

\begin{restatable}[Soundness of \textsc{VerifyAll}]{theorem}{lemSoundness}\label{thm:soundness}
Let $\prog$ be a program, $\mathit{t} = [a_1, \ldots, a_m]$ an assertion sequence,
$k$ a bound, and $(\mathit{IG}, F) = \Call{VerifyAll}{\prog, \mathit{t}, k}$.
For every $i \in \mathtt{index}(\mathit{IG})$ with $a_i = \tuple{f_i, l_i, \phi_i}$:
\begin{enumerate}
    \item If $\mathit{IG}[i] = \emptyset$, then $\phi_i$ holds at line $l_i$ of $f_i$
          on every execution in $\execk{\prog}{k}$.
    \item If $\mathit{IG}[i] = S \neq \emptyset$, then $\phi_i$ holds at line $l_i$ of $f_i$
          on every execution in $\execk{\asm{\prog}{S}}{k}$.
\end{enumerate}
\end{restatable}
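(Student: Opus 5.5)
The argument is a direct case analysis on how an index $i$ can come to lie in $\mathtt{index}(\mathit{IG})$ during \textsc{VerifyAll}. We rely on one assumption about the back end: the verifier $\verifier_k$ is sound. Concretely, if $\verifier_k(\prog', S, a)$ returns $\mathsf{Verified}$ for $a=\tuple{f,l,\phi}$, then $\phi$ holds at line $l$ of $f$ on every execution in $\execk{\asm{\prog'}{S}}{k}$. This is the standard guarantee of bounded model checking under the unwinding semantics of Definition~\ref{def:kexec}, and we take it as given.

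\textbf{Step 1: invariant on $\mathit{IG}$.} By inspecting Algorithm~\ref{alg:verify}, we show that a key $i$ is written into $\mathit{IG}$ in exactly two places.
\begin{itemize}
\item In the standalone branch, $\verifier_k(\prog,\emptyset,a_i)$ returned $\mathsf{Verified}$, and the algorithm sets $\mathit{IG}[i]=\emptyset$.
\item In the compositional branch, $\verifier_k(\prog,S,a_i)$ returned $\mathsf{Verified}$ with $S=\{a_j: j<i\}$, and the algorithm sets $\mathit{IG}[i]=S$.
\end{itemize}
We also check that no entry is overwritten later. Each $a_i$ is processed exactly once in the loop, so $\mathit{IG}[i]$ at termination equals the set passed to the successful verifier call. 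This gives the invariant: for every $i\in\mathtt{index}(\mathit{IG})$, the call $\verifier_k(\prog,\mathit{IG}[i],a_i)$ returned $\mathsf{Verified}$. It is proved by a routine induction on loop iterations.

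\textbf{Step 2: apply verifier soundness.}
\begin{itemize}
\item \emph{Case $\mathit{IG}[i]=\emptyset$.} Definition~\ref{def:asm} gives $\execk{\asm{\prog}{\emptyset}}{k}=\execk{\prog}{k}$, because the universal quantifier over $S$ is vacuous. Verifier soundness then yields claim~1; in the terms of Definition~\ref{def:binv}, $a_i$ is a $k$-bounded invariant of $\prog$.
\item \emph{Case $\mathit{IG}[i]=S\neq\emptyset$.} Verifier soundness applied to $\asm{\prog}{S}$ directly gives claim~2.
\end{itemize}

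\textbf{Main obstacle.} The difficulty is not the logic but pinning down the interface between the algorithm and the semantics. We must check that the program the verifier actually analyzes, with only the target assertion kept as \texttt{assert} and all other assertions removed or turned into \texttt{assume}, has exactly the execution set $\execk{\asm{\prog}{S}}{k}$.

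Two details need care. First, assertions outside $S\cup\{a_i\}$ must be dropped rather than kept as checks. Otherwise a failing non-target assertion could abort an execution before it reaches $l_i$, and that execution would be silently excluded. Since line numbers are taken on the assertion-free skeleton (Definition~\ref{def:assertion}), removing assertions leaves locations unchanged, so this is consistent. Second, the unwinding assertion must make any execution exceeding the bound count as a failure. That ensures a $\mathsf{Verified}$ result really quantifies over all of $\execk{\cdot}{k}$. I would state this correspondence explicitly as a lemma about the encoding and then conclude.

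Finally, we note that the theorem does \emph{not} claim that the assumptions in $S$ are themselves verified. Claim~2 is purely relative to $\asm{\prog}{S}$, so no induction over dependencies is required.
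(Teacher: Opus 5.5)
Your proposal is correct and follows the same route as the paper, whose proof is a one-line appeal to the soundness of the bounded model checker on the submitted program ($\prog$ in case~1, $\asm{\prog}{S}$ in case~2); your invariant on how entries of $\mathit{IG}$ get written, and your remarks on the encoding, make explicit what the paper takes for granted. One small quibble: covering every execution in $\execk{\cdot}{k}$ does not require the unwinding assertion, since unwinding assumptions would also explore all $k$-bounded executions, so that condition is harmless but not needed for this theorem.
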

\begin{proof}
See Appendix~\ref{app:proofs}.
\end{proof}

The theorem establishes that each assertion is sound with respect to
assumptions $S$ that it depends on.

\subsection{Fact Generation}\label{sec:facts}

Since a dependency set of an assertion $a$ may itself contain independently verified and conditionally verified assertions,  we can tighten the set of assertions that $a$ depends on by iteratively back-tracing its dependencies to ``root'' assertions. We formalize this via the dependency closure.

\begin{definition}[Dependency Closure]\label{def:deps}
Given an implication graph $\mathit{IG}$ output by \Call{VerifyAll}{}, the
\emph{dependency closure} of assertion $a_i$ is defined inductively:
\[
  \deps{a_i} \;=\; \bigcup_{a_j \in \mathit{IG}[i]} \deps{a_j}.
\]
When $i \not\in \mathtt{index}(IG)$ (i.e., $a_i$ is unverified), $\deps{a_i} = \{a_i\}$; and when $\mathit{IG}[i] = \emptyset$ (i.e., $a_i$ is $\verified$), $\deps{a_i} = \emptyset$.
\end{definition}

Because Algorithm~\ref{alg:verify} records $\mathit{IG}$ explicitly, the
dependency closure $\deps{a_i}$ is checkable automatically.
We now prove that soundness is preserved under this tightened set of assumptions.

\begin{restatable}[Sound assumptions]{theorem}{thmChain}\label{cor:chain}
Under the hypotheses of Theorem~\ref{thm:soundness}, for every
$i \in \mathtt{index}(\mathit{IG})$ with $a_i = \tuple{f_i, l_i, \phi_i}$, 
  $\phi_i \text{ holds at line } l_i \text{ of } f_i
  \text{ on every execution in } \execk{\asm{\prog}{\deps{a_i}}}{k}$.
\end{restatable}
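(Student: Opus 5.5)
I will argue by strong induction on the index $i$, using the fact that \textsc{VerifyAll} processes $\mathit{t}$ in order and only ever records $\mathit{IG}[i] \subseteq \{a_j : j < i\}$. This makes the dependency relation well-founded, so $\deps{a_i}$ is well defined. The induction hypothesis is the statement of the theorem for all $j < i$ with $j \in \mathtt{index}(\mathit{IG})$.

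\textbf{Base case: $\mathit{IG}[i] = \emptyset$.} Here Theorem~\ref{thm:soundness}(1) gives that $\phi_i$ holds on all of $\execk{\prog}{k}$. By Definition~\ref{def:asm}, $\execk{\asm{\prog}{\deps{a_i}}}{k} \subseteq \execk{\prog}{k}$ for any set of assumptions, so the claim follows. (In fact $\deps{a_i}=\emptyset$ here, so the two sets are equal.)

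\textbf{Inductive case: $\mathit{IG}[i] = S \neq \emptyset$.} Fix an execution $e \in \execk{\asm{\prog}{\deps{a_i}}}{k}$. By Theorem~\ref{thm:soundness}(2), it suffices to show $e \in \execk{\asm{\prog}{S}}{k}$, that is, that every $a_j \in S$ holds at its line in $e$. Split on the status of each $a_j \in S$:
\begin{itemize}
\item If $j \notin \mathtt{index}(\mathit{IG})$, then $a_j \in \deps{a_j} \subseteq \deps{a_i}$, so $\phi_j$ holds in $e$ directly by the definition of $e$.
\item If $j \in \mathtt{index}(\mathit{IG})$, then $\deps{a_j} \subseteq \deps{a_i}$. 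Hence $e$ lies in $\execk{\asm{\prog}{\deps{a_j}}}{k}$, using monotonicity of Definition~\ref{def:asm}: more assumptions give fewer executions. The induction hypothesis at $j<i$ then gives that $\phi_j$ holds in $e$.
\end{itemize}
So $e$ satisfies all of $S$, and $\phi_i$ holds in $e$.

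The main obstacle is justifying that the entries of $\mathit{IG}$ only reference earlier indices and contain only assertions of $\mathit{t}$. This is what makes the inductive definition of $\deps{\cdot}$ terminate and the induction valid. I will read it off Algorithm~\ref{alg:verify}, where the compositional check at step $i$ uses $S=\{a_j : j<i\}$. A secondary subtlety is that "holds at line $l_j$ in $e$" must be interpreted uniformly as in Definition~\ref{def:asm}, so that membership in an assumption-restricted execution set transfers between different assumption sets. I will handle this by noting that all these sets are defined as filters on the same ambient set $\execk{\prog}{k}$.
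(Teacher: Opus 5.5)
Your proposal is correct and matches the paper's proof. Both use an induction made well-founded by $\mathit{IG}[i] \subseteq \{a_j : j < i\}$, handle the case $\mathit{IG}[i]=\emptyset$ via Theorem~\ref{thm:soundness}(1), and in the inductive step show $\execk{\asm{\prog}{\deps{a_i}}}{k} \subseteq \execk{\asm{\prog}{S}}{k}$ by cases on each $a_j \in S$. The only difference is cosmetic: you fold the paper's separate subcase $\mathit{IG}[j]=\emptyset$ into the induction hypothesis, which is valid because your hypothesis covers every $j<i$ in $\mathtt{index}(\mathit{IG})$.
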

\begin{proof}
See Appendix~\ref{app:proofs}.
\end{proof}


After verification and dependency analysis, each $\verified$ or $\cverified$ assertion is translated by  $\sys$ into a natural-language statement and embedded as a comment in $\prog_0$, where $K$ is the bound under which it was established:
\begin{itemize}
  \item A $\verified$ statement is tagged \texttt{//FACT[k=\textit{K}]}.
  \item A $\cverified$ statement is tagged \texttt{//FACT[k=\textit{K}, cond]} and
        is accompanied by the index set $\deps{a_i}$
        (Definition~\ref{def:deps}). 
\end{itemize}
By Theorem~\ref{cor:chain}, every presented
fact is formally grounded: a developer who accepts the stated conditions for
$\cverified$ facts obtains a sound guarantee for all $k$-bounded executions
of the program. These fact-annotated programs are what developers receive and empirically 
improve developers' comprehension of the program.

\begin{algorithm}[t]
\scriptsize
\caption{\textsc{VerifyAll}: Compositional Verification}\label{alg:verify}
\begin{algorithmic}[1]
\Require Program $\prog$, assertion sequence $\mathit{t} = [a_1, \ldots, a_m]$, bound $k$
\Ensure Implication graph $\mathit{IG}$, falsified set $F$
\State $\mathit{IG} \gets \{\}$;\quad $F \gets \{\}$
\For{$i \gets 1$ \textbf{to} $m$}
    \State $r \gets \verifier_k\bigl(\prog,\ \emptyset,\ a_i\bigr)$ \label{line:alonecheck}
        \Comment{standalone: no assumptions}
    \If{$r = \mathsf{Verified}$}
        \State $\mathit{IG}[i] \gets \emptyset$
    \Else
        \State $S \gets \{a_j : j < i\}$
        \State $r \gets \verifier_k\bigl(\asm{\prog}{S},\ \emptyset,\ a_i\bigr)$\label{line:composecheck}
            \Comment{compositional: assume all preceding}
        \If{$r = \mathsf{Verified}$}
            \State $\mathit{IG}[i] \gets S$
        \ElsIf{$r = \mathsf{Falsified}$}
            \State $F \gets F \cup \{i\}$ \label{line:failedass}
        \EndIf
    \EndIf
\EndFor
\State \Return $(\mathit{IG},\, F)$
\end{algorithmic}
\end{algorithm}

\section{Implementation}\label{sec:impl}

We implemented a prototype of \sys in Python. We now describe the key implementation choices for \sys's LLM backend and verification engine.

\noindent\textbf{LLM backend.}
All prompting stages (property elicitation, synthesis, annotation, bound reduction,
assertion-to-property mapping, and fact translation) call the OpenAI API.
Responses are cached by a hash of the prompt and model so that repeated runs are
reproducible and cost-free.
Reasoning effort is set to \texttt{low} for synthesis and annotation to reduce latency,
and can be increased for harder benchmarks.

\noindent\textbf{Portfolio model checker.}
$\verifier_k$ is implemented as a parallel portfolio of two bounded model checkers:
CBMC~\citep{clarke2004cbmc} (run with and without the Bitwuzla~\citep{niemetz2023bitwuzla} SMT back-end) and
ESBMC~\citep{cordeiro2012esbmc} (with Bitwuzla).
All three solver configurations are launched concurrently;
the portfolio returns as soon as any solver reaches a definitive answer.
If all solvers time out within a per-assertion budget, the status is $\unknown$.
We use a default unwind bound of $k = 5$ and a per-solver timeout of 60\,s.

\noindent\textbf{Assertion normalization.}
Before handing $\prog^+$ to the model checker, \sys flattens multi-line
\texttt{assert(\ldots)} statements onto a single line.
This ensures that the logical line numbers in Definition~\ref{def:assertion}
correspond exactly to the source positions seen by the parser, and that
libclang can unambiguously identify each assertion's location.

\noindent\textbf{Call-graph extraction.}
\sys parses $\prog^+_k$ with libclang to obtain the call graph and per-function
source extents.
Assertions are extracted by regex, stripped from the function body to compute
logical line numbers (Definition~\ref{def:assertion}), and re-inserted during
verification as either \texttt{assert} or \texttt{assume} by the
\texttt{Method.generate\_with\_assertions} routine.
Global code is preserved verbatim and prepended to every verification query.

\noindent\textbf{Compilation checks.}
Between every LLM-generation step, \sys compiles the output with \texttt{gcc}
and re-prompts on failure, feeding the compiler error back as additional context.
This ensures that both $\prog_0$ and $\prog^+$ are syntactically valid before
any verification is attempted.

\section{Experimental Evaluation}\label{sec:eval-sat}

In this section, we perform experimental evaluation of \sys on a diverse set of \texttocode tasks. Concretely,
we evaluate \sys along two axes:
\begin{itemize}
    \item (\emph{Feasibility}): can \sys generate programs and derive verified annotations for
non-trivial \texttocode tasks?
\item (\emph{Utility}): do those verified annotations help developers understand
the generated code?
\end{itemize}

\subsection{Experimental Setup}

We evaluate on 18 diverse \texttocode tasks, from implementing the Simplex algorithm for deciding the feasibility of linear systems to parsing and filtering lines in a log file.  Each task is specified by a natural-language description, which can be found in Appendix~\ref{app:benchmarks};  \sys takes the description as input and automatically generates the corresponding program and verified assertions. We use GPT-5.1 with low reasoning for LLM calls and set the loop-unwind bound to $k = 5$. The portfolio verification engine uses a per-assertion wall-clock timeout of 60\,s. Experiments were run on a workstation equipped with AMD Ryzen Threadripper PRO 5945WX processors.


\subsection{Performance of \sys on program-annotation co-synthesis}

Table~\ref{tab:verification} shows the per-task breakdown of assertion statuses.
Depending on the complexity of the generated code, \sys generates 10 to 85 assertions.
The vast majority of generated assertions are verified by the model checker without any
assumptions, confirming that \sys can produce programs with assertions that are
automatically verifiable. A small fraction cannot be verified in isolation but hold
conditionally on earlier established facts. For most benchmarks the number of
unverified assertions (falsified or timed out) is small, but \textit{base\_addition}
is a clear outlier with 14 of 23 assertions falsified. Upon closer examination, these
failures are not due to bugs in the program logic, but due to the over-approximation of 
certain operators by the verification tool which led to spurious counter-examples. 

LLM synthesis time ranges from 36.6\,s to 459.3\,s across benchmarks, which is
reasonable given that each run involves multiple sequential LLM invocations
(property elicitation, synthesis, annotation, bound reduction).
Verification time varies considerably more, from 2.4\,s (\textit{shift\_letter}) to
2077.6\,s (\textit{sudoku}), driven by the complexity of the properties and programs.
For example, \textit{sudoku} timed out on assertions like
\texttt{board\_is\_partial\_valid}, which requires the model checker to reason
about global row, column, and subgrid uniqueness simultaneously; similarly,
\textit{knapsack}'s costliest assertion checks that the entire item array is
sorted by density after a comparison-based sort.
Such global relational properties are inherently harder for bounded model
checking than local invariants. Nonetheless, as shown in the next section, \sys was able to 
generate and verify meaningful assertions that help developers with program understanding.

The current implementation verifies assertions sequentially, so total
verification time grows linearly with the number of assertions.
Parallelizing the verification loop---running independent assertion checks
concurrently---is the most direct path to reducing wall-clock time, since
each call to $\verifier_k$ is independent.
More aggressive strategies include tightening the loop-unwind bound for
assertions identified as expensive, or limiting the assertion budget per
function to focus verification effort on the most informative properties.

\begin{table}[t]
\centering
\caption{Verification results per benchmark. \emph{LLM (s)} is total wall-clock time for
all LLM calls (synthesis, annotation, bound reduction, mapping, translation);
\emph{Verify (s)} is total verification time.}
\label{tab:verification}
\scriptsize
\begin{tabular}{lcccccc}
\toprule
\textbf{Benchmark} & \textbf{\#Assert} & \textbf{\verified} & \textbf{\cverified}
                   & \textbf{\textbf{Unverified}} & \textbf{LLM (s)} & \textbf{Verify (s)} \\
\midrule
bubblesort             &  10 &  10 &  0 &  0 &   36.6 &    2.9 \\
condorcet              &  25 &  22 &  0 &  3 &  172.5 &  253.0 \\
connected\_component   &  21 &  21 &  0 &  0 &   95.7 &  127.5 \\
shortest\_path         &  25 &  24 &  0 &  1 &  143.0 &   32.8 \\
knapsack               &  23 &  15 &  4 &  4 &  125.3 &  879.3 \\
sudoku                 &  49 &  43 &  0 &  6 &  164.9 & 2077.6 \\
tictactoe              &  85 &  85 &  0 &  0 &  182.0 &   21.6 \\
connect4               &  59 &  56 &  0 &  3 &  459.3 &  981.0 \\
numberle               &  34 &  32 &  0 &  2 &  345.3 &    8.6 \\
gaussian\_elimination  &  17 &  17 &  0 &  0 &   89.8 &  139.7 \\
simplex                &  21 &  21 &  0 &  0 &  234.0 &  163.2 \\
log\_timestamp\_filter &  20 &  18 &  1 &  1 &  223.0 &    9.3 \\
sort\_timestamp        &  34 &  34 &  0 &  0 &  121.3 &   24.0 \\
shift\_letter          &  19 &  19 &  0 &  0 &  104.9 &    2.4 \\
base\_addition         &  23 &   9 &  0 & 14 &   44.5 &   12.6 \\
cellular\_automata     &  17 &  17 &  0 &  0 &  159.3 &  206.9 \\
maze                   &  32 &  31 &  0 &  1 &   65.4 &  436.2 \\
\midrule
\textbf{Mean$\pm$Std}  & $30.2{\pm}18.6$ & $27.9{\pm}19.0$ & $0.3{\pm}1.0$ & $1.3{\pm}3.4$ & $162.8{\pm}108.0$ & $316.4{\pm}542.7$ \\
\bottomrule
\end{tabular}
\vspace{-0.5cm}
\end{table}

\subsection{Can verified annotations aid comprehension?}

We conducted a user study to evaluate whether verified annotations improve code comprehension. Concretely, we manually generated 10 multiple-choice questions, each anchored to a single verified assertion. Questions were designed to reference the underlying assertion without quoting it directly (e.g., \emph{``When the program prints the vertices of a connected component (line 159), what order are they in?''}). We then manually constructed three distractors per question, leading to a four-choice multiple-choice format. The raw surveys can be found in the supplementary materials.

Each multiple-choice question was evaluated using Human Intelligence Tasks (HITs) on Mturk. Each HIT consisted of a single survey containing basic instructions, a C program with lines numbered for readability, and one multiple choice question. For each question, we consider two cases, one where the program is not annotated with verified facts, the other where the C program was annotated with \texttt{//FACT:} comments, which are natural language statements translated from verified assertions. For this experiment, we only added assertions that are independently verified (without assumptions). These translations were intended to preserve only the logic described by an assertion. The survey was set up such that there is no overlap between participants of the two HITs with respect to the same question. Each HIT was completed by at least 20 workers. All HITs shared the same basic worker qualifications: a HIT approval rate $>95\%$, a minimum of 1{,}000 approved HITs, and the MTurk job function qualifier \textit{Information Technology}, targeting tech-adjacent participants. Each worker was allowed 3 minutes to complete the survey and paid \$0.50 per HIT.

Figure~\ref{fig:results} summarizes the results. The top-left plot shows the fraction of participants that answer correctly for each question (accuracy). The accuracy is higher when the programs are annotated with verified facts on all but two questions. This suggests that the verified annotations denote information that is not immediately obvious to the participants and that can improve code comprehension.

The remaining three plots examine response timing. Participants overall take less time to answer a question when annotations are present (top right). Among participants who answered correctly (bottom left), Group B is faster on 6/10 benchmarks, in particular, Group B participants are on average faster on  the last two questions where the accuracy between Group A and group B is the same. Interestingly, on the first question (connected-comps[Vertice order]), Group B users spent significantly longer (96s vs. 82s) while achieving a significantly higher accuracy (100\% vs. 40\%). We speculate this is due to the overhead the user spent on reading the annotations. Overall, our experiment suggests that \sys can improve practical code comprehension both in terms of accuracy in answering questions about the program and the response speed.

\begin{figure}[t!]
    \centering

    \begin{center}
    \small
    \tikz[baseline=0.1ex]{\fill[color={rgb,255:red,76;green,114;blue,176}, opacity=0.85] (0,0) rectangle (0.22,0.22);}
    \hspace{0.35em} Group A (unannotated)
    \hspace{1.6em}
    \tikz[baseline=0.1ex]{\fill[color={rgb,255:red,221;green,132;blue,82}, opacity=0.85] (0,0) rectangle (0.22,0.22);}
    \hspace{0.35em} Group B (annotated)
    \end{center}

    \vspace{0.35em}

    \begin{subfigure}[b]{0.49\linewidth}
        \includegraphics[width=\linewidth, trim=0 110 0 0, clip]{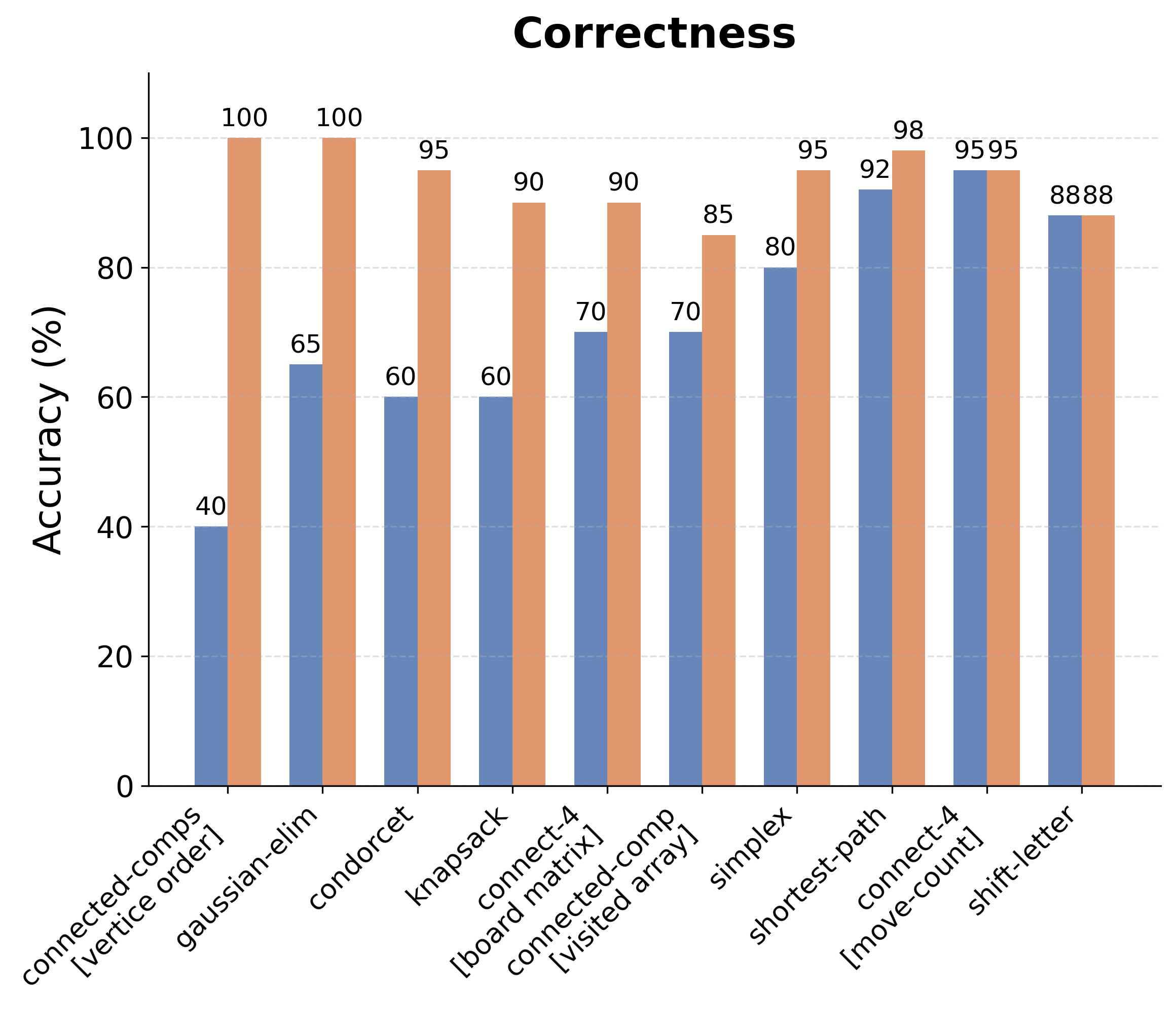}
    \end{subfigure}
    \hfill
    \begin{subfigure}[b]{0.49\linewidth}
        \includegraphics[width=\linewidth, trim = 0 110 0 0, clip]{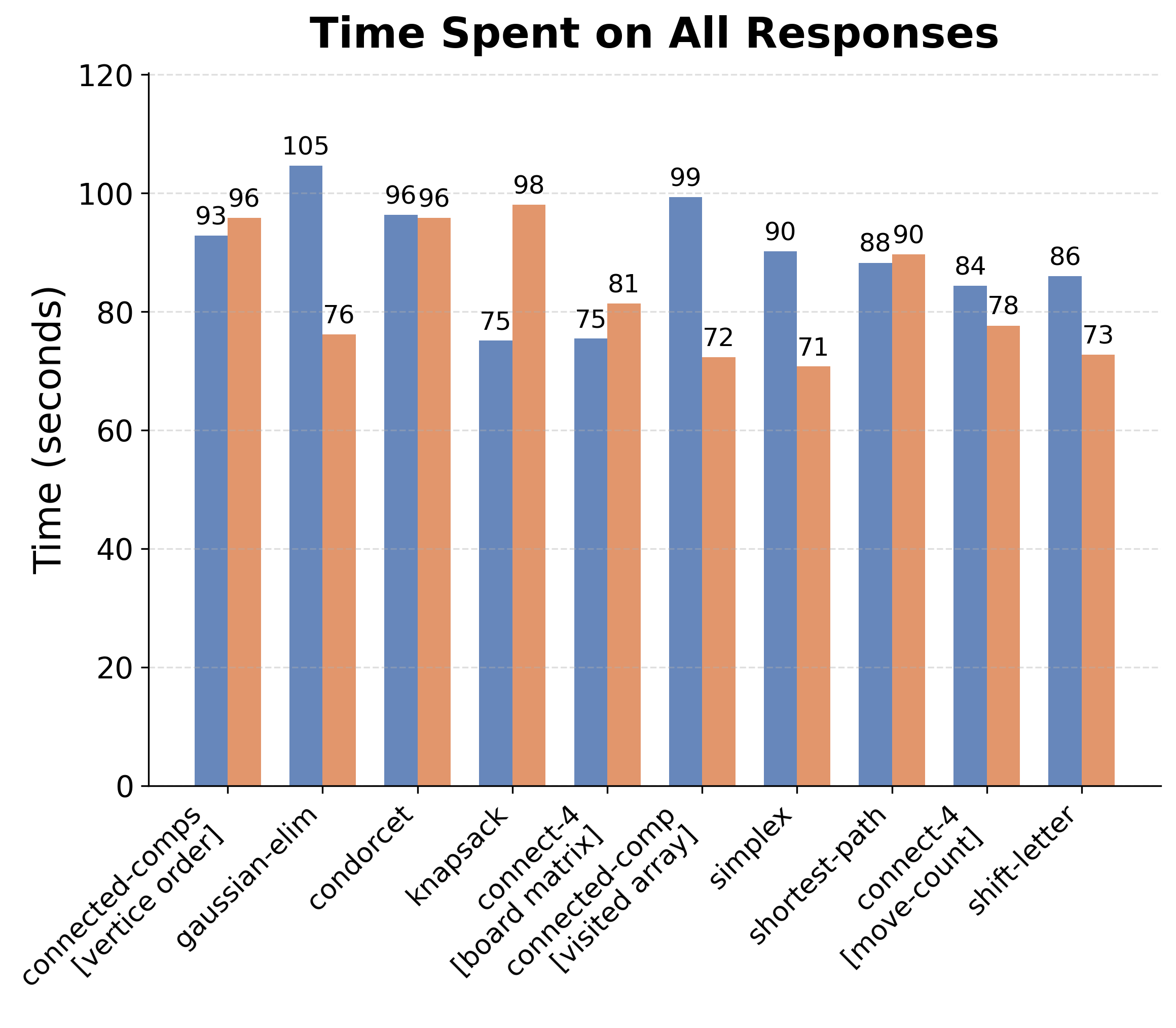}
    \end{subfigure}

    \vspace{0.75em}

    \begin{subfigure}[b]{0.49\linewidth}
        \includegraphics[width=\linewidth]{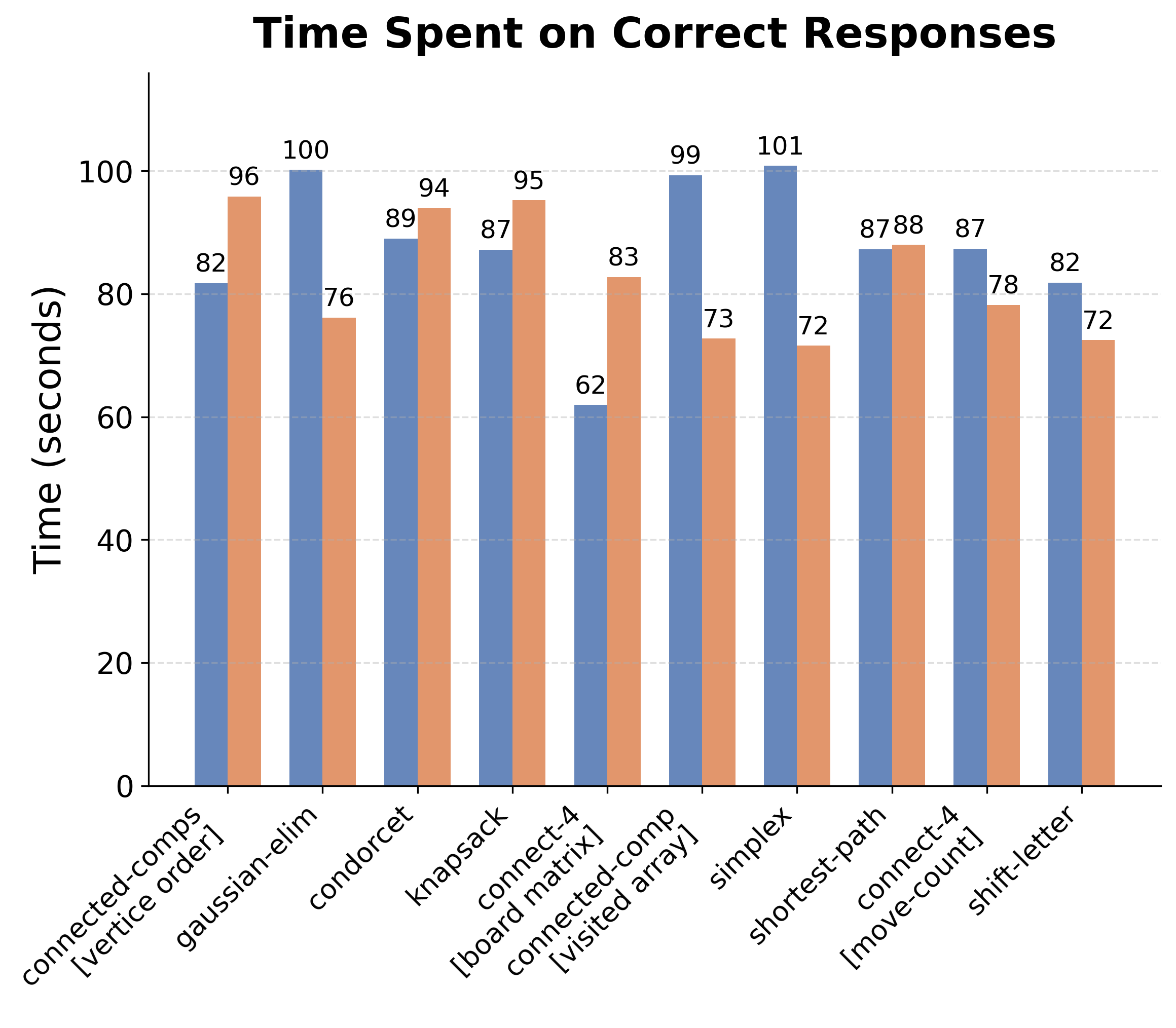}
    \end{subfigure}
    \hfill
    \begin{subfigure}[b]{0.49\linewidth}
        \includegraphics[width=\linewidth]{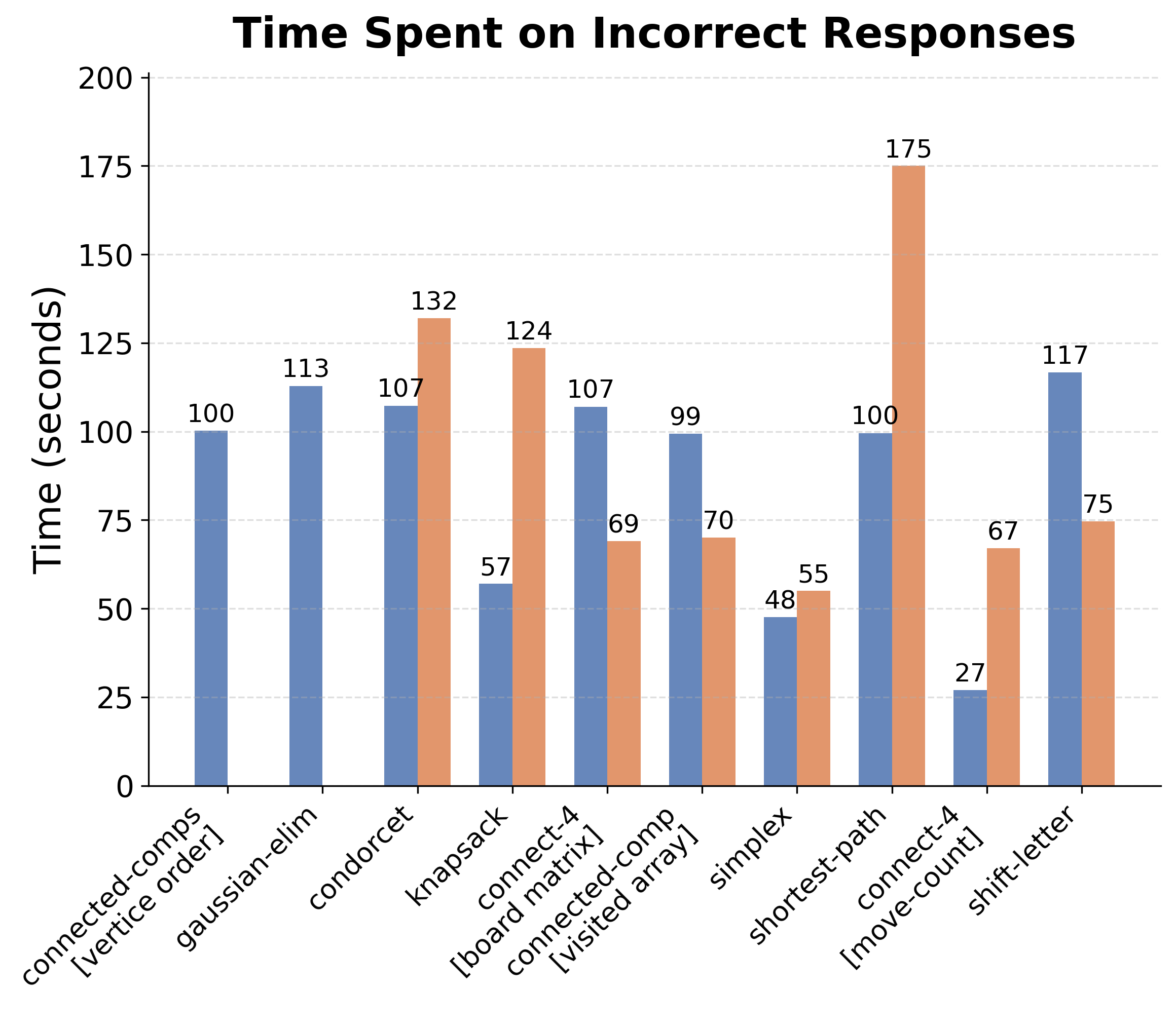}
    \end{subfigure}

\vspace{-0.3cm}
    \caption{Comparison of correctness and user-timing between treatment and control HITs. The plots visualize overall correctness (top-left) and total completion time (top-right), with further breakdowns of timing for correct (bottom-left) and incorrect (bottom-right) responses across all benchmarks.}
    \label{fig:results}
    \vspace{-0.5cm}
\end{figure}
\section{Related Work}\label{sec:related}

\paragraph{LLM-based code synthesis.}
LLMs have demonstrated strong code generation capability across benchmarks spanning
algorithmic problems~\citep{chen2021evaluating,austin2021program,li2022competition}
and real-world repositories~\citep{jimenez2024swebench}.
These systems provide no correctness guarantees; developers must audit the output themselves.
\sys is orthogonal to any synthesis back-end: it augments generated code with
formally verified annotations rather than improving generation quality.

\paragraph{Verified code generation.}
A growing line of work pursues code generation with formal correctness guarantees.
\citet{misu2024towards} evaluate LLMs on synthesizing Dafny methods with full contracts;
\citet{chen2024automated} synthesize Verus proof annotations for Rust via self-evolving fine-tuning;
Clover~\citep{sun2024clover} iteratively verifies LLM-generated Dafny in a closed loop;
Baldur~\citep{first2023baldur} generates and repairs whole Isabelle proofs with LLMs;
\citet{bursuc2025benchmark} introduce a large benchmark spanning Dafny, Verus, and Lean.
All these approaches require verification-aware languages or complete formal proofs.
\sys instead targets C with push-button bounded model checking, trading proof completeness
for practical accessibility.

\paragraph{Specification generation.}
Daikon~\citep{ernst2007daikon} discovers likely invariants from concrete executions,
but these are heuristically inferred and may not hold on unobserved inputs.
Lemur~\citep{wu2024lemur} integrates LLMs with automated verifiers to discharge invariants,
but targets full verification of existing programs rather than co-synthesis.
Recent work generates formal contracts~\citep{richter2025beyond} or Hoare-style
annotations~\citep{bouras2025hoareprompt} alongside code without formally verifying them.
\sys conducts best-effort formal verification for each generated specification.

\section{Conclusion}\label{sec:conclusion}

We presented \sys, an end-to-end pipeline that augments LLM-generated C programs with
formally verified annotations to aid developer comprehension.
Starting from a natural-language description, \sys co-generates a program and candidate
assertions, verifies them compositionally using a call-graph-aware procedure backed by a
portfolio of bounded model checkers, and surfaces only the verified facts to the developer
as natural-language comments.
On 18 diverse \texttocode benchmarks, \sys was able to successfully generate and verify a significant number of assertions. A user study with crowdworkers confirms that programs annotated with verified facts improve both answer accuracy and response speed compared to unannotated programs, demonstrating that \sys produces verified annotations that are meaningful in practice.

\paragraph{Limitations and future work.}
A fundamental limitation of \sys is that it only provides guarantees up to a certain depth. This is a deliberate trade-off that keeps verification feasible for complex programs. In the future, it might be interesting to consider invoking AI-driven proof assistants to fully verify properties in a best-effort manner. 
While the scalability of bounded model checkers has improved significantly in the past two decades, \sys can still be bottlenecked by the underlying verifiers and can only work with programs supported by the verifiers. Extending the approach to richer languages such as Python or Rust, and to heap properties and quantified invariants, are natural next steps. Finally, the quality of verified annotations depends on the LLM generating meaningful assertions. In the future, it would be interesting to consider alternative property elicitation strategies or design metrics to evaluate the quality of the generated assertions. Due to budget constraints, the human study was conducted with crowdworkers and at a relatively small scale. While the patterns we observed are clear and consistent, expanding the user study with professional developers in an offline setting on a larger set of questions would further establish the effectiveness of \sys.
\newpage
%
%
%
\bibliographystyle{ACM-Reference-Format}
\bibliography{references}

\newpage
\appendix
\section{Proofs}\label{app:proofs}

\lemSoundness*
\begin{proof}
Both cases follow directly from the soundness of the underlying bounded model
checker (Definitions~\ref{def:kexec}--\ref{def:asm}):
$\verifier_k$ returns $\mathsf{Verified}$ only when it finds no counterexample
within bound $k$, meaning the assertion holds on all $k$-bounded executions of
the submitted program ($\prog$ in case~1; $\asm{\prog}{S}$ in case~2).
\end{proof}

\thmChain*
\begin{proof}
By structural induction on the definition of $\deps{a_i}$
(Definition~\ref{def:deps}).
The induction is well-founded because Algorithm~\ref{alg:verify}
ensures $\mathit{IG}[i] \subseteq \{a_j : j < i\}$, so the
dependency graph is acyclic.

\textbf{Base case}: $\mathit{IG}[i] = \emptyset$.
Then $\deps{a_i} = \emptyset$, and Theorem~\ref{thm:soundness}~(1)
gives that $\phi_i$ holds on every execution in
$\execk{\prog}{k} = \execk{\asm{\prog}{\emptyset}}{k}
= \execk{\asm{\prog}{\deps{a_i}}}{k}$.

\textbf{Inductive step}: $\mathit{IG}[i] = S \neq \emptyset$,
so $\deps{a_i} = \bigcup_{a_j \in S}\deps{a_j}$
by Definition~\ref{def:deps}.
By Theorem~\ref{thm:soundness}~(2), $\phi_i$ holds on every execution in
$\execk{\asm{\prog}{S}}{k}$.
It therefore suffices to show
\[
  \execk{\asm{\prog}{\deps{a_i}}}{k}
  \;\subseteq\;
  \execk{\asm{\prog}{S}}{k}.
\]
Let $e \in \execk{\asm{\prog}{\deps{a_i}}}{k}$ and $a_j \in S$.
We show $\phi_j$ holds on $e$:
\begin{itemize}
  \item If $j \notin \mathtt{index}(\mathit{IG})$, then
        $\deps{a_j} = \{a_j\}$, so $a_j \in \deps{a_i}$, and $\phi_j$
        is assumed in $\asm{\prog}{\deps{a_i}}$; hence $\phi_j$ holds on $e$.
  \item If $\mathit{IG}[j] = \emptyset$, then
        Theorem~\ref{thm:soundness}~(1) gives $\phi_j$ on all of
        $\execk{\prog}{k} \ni e$.
  \item If $\mathit{IG}[j] \neq \emptyset$, the inductive hypothesis
        gives $\phi_j$ on every execution in
        $\execk{\asm{\prog}{\deps{a_j}}}{k}$.
        Since $\deps{a_j} \subseteq \deps{a_i}$, we have
        $e \in \execk{\asm{\prog}{\deps{a_i}}}{k}
           \subseteq \execk{\asm{\prog}{\deps{a_j}}}{k}$,
        so $\phi_j$ holds on $e$.
\end{itemize}
Since $\phi_j$ holds on $e$ for every $a_j \in S$, the execution $e$
satisfies all assumptions of $\asm{\prog}{S}$, i.e.,
$e \in \execk{\asm{\prog}{S}}{k}$.
Therefore $\phi_i$ holds on every execution in
$\execk{\asm{\prog}{\deps{a_i}}}{k}$.
\end{proof}

\section{Benchmark Descriptions}\label{app:benchmarks}

The following 18 benchmarks are used in the evaluation (\S\ref{sec:eval-sat}).
Each entry gives the full natural-language task description used as input to \sys.
The descriptions are the verbatim prompts passed to the LLM; no other specification
is provided to the system.

\paragraph{bubblesort.}
Implement the Bubble Sort algorithm to sort an array of integers in ascending order.
The program reads an integer $n$ ($1 \le n \le 100$) followed by $n$ integers from
stdin.
It applies the standard Bubble Sort---repeatedly comparing adjacent elements and
swapping if out of order---then prints \texttt{``Sorted Array:''} followed by the
sorted elements and the total number of swaps performed.

\paragraph{connected\_component.}
Read an adjacency matrix of an undirected graph and identify all connected components.
Input: an integer $V$ ($1 \le V \le 100$) followed by a $V \times V$ binary
adjacency matrix.
The matrix must be symmetric; if $A[i][j] \neq A[j][i]$, print
\texttt{``Invalid adjacency matrix''} and terminate.
For each connected component, print the vertex indices (0-indexed) in ascending order,
one component per line, in discovery order.

\paragraph{shortest\_path.}
Find the shortest path between two vertices in an unweighted graph using BFS.
Input: an integer $V$ ($2 \le V \le 100$), a $V \times V$ adjacency matrix,
and source/destination indices \texttt{start} and \texttt{end}.
Validate that \texttt{start} and \texttt{end} are in $[0, V{-}1]$.
Output \texttt{``Path: $u$ -> \ldots -> $v$''} or \texttt{``No path found''}.

\paragraph{knapsack.}
Implement the Greedy algorithm for the Fractional Knapsack problem.
Input: the number of items $n$ and capacity $W$ (double), followed by $n$ pairs
of value and weight (doubles).
Sort items by value-to-weight ratio (descending); take items or fractions thereof
until the knapsack is full.
Print the total value to two decimal places and a breakdown of items used.

\paragraph{sudoku.}
Implement a playable $4 \times 4$ Sudoku game using digits 1--4 with a hardcoded
partially filled initial board.
On each turn, display the board, prompt for row, column, and value (each in $1$--$4$),
and reject moves that violate Sudoku rules (out-of-range, pre-filled cell, or duplicate
in row/column/$2\times 2$ subgrid).
The game ends when the board is fully and correctly filled.

\paragraph{tictactoe.}
Implement a two-player Tic-Tac-Toe game on a $3 \times 3$ grid (Player~1: \texttt{X},
Player~2: \texttt{O}).
Alternately prompt each player for a row/column pair (1--3), reject out-of-range or
occupied moves, check for three-in-a-row (horizontal, vertical, diagonal) after each
valid move, and announce the winner or a draw.
The game loop runs for at most \texttt{MAX\_TURN} iterations.

\paragraph{connect4.}
Implement a two-player Connect Four game on a $6 \times 7$ grid (\texttt{X} vs.\
\texttt{O}, \texttt{X} first).
On each turn, display the board, prompt for a column (1--7), and reject invalid or
full columns.
Apply gravity (piece falls to lowest available row), then check for four consecutive
pieces horizontally, vertically, or diagonally.
Announce the winner or a draw if the board is full; the loop runs for at most
\texttt{MAX\_TURN} turns.

\paragraph{numberle.}
Implement a ``Numberle'' guessing game.
The target is a hardcoded 6-character mathematical equation (e.g., \texttt{4*2=08}).
The player has 6 attempts; each guess must be 6 characters, contain exactly one
\texttt{=}, and be mathematically true.
After each valid guess, print a 6-character feedback string: \texttt{G} (correct
position), \texttt{Y} (wrong position), or \texttt{X} (not present).
End with a victory message or reveal the target after 6 failures.

\paragraph{condorcet.}
Determine the Condorcet winner of an election.
Input: $n$ candidate names ($n \le 10$), then $v$ voter ballots ($v \le 100$),
each ballot a permutation of candidate indices from most to least preferred.
Build a pairwise win matrix; a candidate is the Condorcet winner if they beat every
other candidate head-to-head.
Print \texttt{``Condorcet Winner: [Name]''} or \texttt{``No Condorcet winner''}.

\paragraph{gaussian\_elimination.}
Solve an $n \times n$ system of linear equations using Gaussian elimination with
partial pivoting.
Input: integer $n$, then the $n \times (n{+}1)$ augmented matrix (doubles).
Before each elimination step, swap the current row with the row having the largest
absolute pivot value.
If the pivot falls below $10^{-9}$, print \texttt{``No unique solution''}.
Otherwise perform back-substitution and print $x_1, \ldots, x_n$ to 4 decimal places.

\paragraph{simplex.}
Check satisfiability of $Ax = b$ with $x \ge 0$ using the Simplex method.
Input: integers $m$ and $n$ ($m, n \le 10$), then the $m \times (n{+}1)$ augmented
matrix.
Use Bland's rule (lowest index) for pivot selection to prevent cycling.
If a row has all non-positive coefficients and a positive $b$ value, print
\texttt{``UNSATISFIABLE''}; otherwise print \texttt{``SATISFIABLE''} and the
variable values.

\paragraph{log\_timestamp\_filter.}
Read a log file from stdin and print all lines whose timestamp year is strictly
before 2024.
Each log line begins with a timestamp in \texttt{YYYY-MM-DD HH:MM:SS} format;
lines shorter than 19 characters or longer than 248 characters are skipped.
Parse the year from the first four characters; print matching lines verbatim.

\paragraph{sort\_timestamp.}
Read $N$ ISO~8601 timestamps with UTC offsets in the format
\texttt{YYYY-MM-DDTHH:MM:SS[+/-]HH:MM} ($1 \le N \le 100$), normalize each
to UTC (handling date rollovers), sort chronologically, and print the original
strings in sorted order.
Timestamps violating the format or containing invalid values (e.g., month 13)
are skipped with \texttt{``Invalid Timestamp''}.

\paragraph{shift\_letter.}
Read a single line of text from stdin (up to 1024 characters) and print the
result of shifting every alphabetic character one step forward in the alphabet,
wrapping \texttt{z} $\to$ \texttt{a} and \texttt{Z} $\to$ \texttt{A}.
Non-alphabetic characters are unchanged.
If the input exceeds 1024 characters, print \texttt{``Input too long''} and
exit with a nonzero code.

\paragraph{base\_addition.}
Perform addition on two positive integers represented as strings in base $n$
($2 \le n \le 10$).
Input: three space-separated tokens---the base $n$ and two digit strings (up to
256 characters each).
Validate the base and digits; perform column-by-column addition with carry;
print the result in base $n$.
Invalid base or digit triggers an error message and termination.

\paragraph{cellular\_automata.}
Simulate Conway's Game of Life on a $2$D toroidal grid (max $50 \times 50$).
Read grid dimensions $R$ and $C$ and the initial state (\texttt{.} = dead,
\texttt{\#} = alive) from stdin.
Repeatedly compute the next generation using toroidal boundary conditions
(neighbors wrap via modulo), display the grid with ANSI escape codes, and sleep
100\,ms between iterations.

\paragraph{maze.}
Find the shortest path from \texttt{S} to \texttt{E} in a maze (max $50 \times 50$)
using BFS.
Input: integers $R$ and $C$, followed by $R$ lines of $C$ characters
(\texttt{\#} = wall, \texttt{.} = open, \texttt{S} = start, \texttt{E} = exit).
Movement is restricted to four cardinal directions.
If a path exists, print its length and display the maze with path cells replaced
by \texttt{*}; otherwise print \texttt{``No path found''}.

\section{LLM Prompts}\label{app:prompts}

This appendix contains the four main prompts used by \sys.
Italicized placeholders (\textit{\{description\}}, \textit{\{program\}},
\textit{\{properties\}}) are filled at runtime with the corresponding artifact.

\subsection{Property Elicitation}\label{app:prompt-specs}

Used in Stage~1 (Section~\ref{sec:methodology}) to enumerate natural-language
safety and correctness properties from the task description.

\begin{lstlisting}
You are an expert C programmer. You need to generate a C program
based on the following natural language description:

  {description}

List safety/correctness properties of the program that performs
this task in succinct natural language. The properties should be
expressible as assertions in the code.

Your answer should just be:
1. ...
2. ...
\end{lstlisting}

\subsection{Verification-Friendly Program Synthesis}\label{app:prompt-synth}

Used in Stage~2 to generate the C program $\prog_0$.
The prompt enforces the structural constraints required for bounded verification.

\begin{lstlisting}
You are an expert C programmer and verification-aware developer.

Task: Generate a complete, working C program that implements the
following specification:

SPEC:
{description}

Constraints (must follow all):
1) Avoid using any third-party/external libraries.
2) Keep the program verification-friendly:
   - No recursion.
   - No dynamic allocation (malloc/free).
   - No floating point.
   - No pointer arithmetic beyond array indexing.
   - Avoid undefined behavior (signed overflow, out-of-bounds,
     shifting by >= width, uninitialized reads).
   - Use fixed maximum sizes for arrays/buffers; validate lengths.
   - For strings, use strnlen() instead of strlen().
3) Deterministic control flow for bounded verification:
   - Every loop must have clear static bounds (constants or
     validated input capped at a constant).
   - If a bound is configurable, declare it as a macro at the top
     (e.g., #define MAX_N 100).
4) Decomposable structure:
   - Provide small functions for each subtask (parsing, validation,
     core logic, output formatting).
5) I/O:
   - Read from stdin and write to stdout.
   - On invalid input, print an error and exit with nonzero code.
6) Output:
   - Return ONLY the full program as a single C file, wrapped in
     triple-backtick ```c formatting.
   - Include a brief comment at the top stating assumptions and bounds.
\end{lstlisting}

\subsection{Assertion Generation}\label{app:prompt-annot}

Used in Stage~3 to annotate $\prog_0$ with \texttt{assert()} statements
expressing the properties $\Pi$.

\begin{lstlisting}
You are an expert C programmer.

You are given:
  Description: {description}
  A C program that implements the task: {program}
  Safety and correctness properties: {properties}

Annotate the C program with assertions that express these properties
as function contracts.

Rules:
- Treat each C function (including static functions and main) as a
  method.
- Add preconditions as assert(...) at the very start of each function.
- Add postconditions as assert(...) immediately before each return.
  For void functions, place postconditions before the closing brace.
- Use only standard C assertions: assert(condition);
- Include <assert.h>.
- Do not add, remove, or modify existing code except for inserting
  assertions.
- Helper functions (returning bool, called within assert, containing
  no assert themselves) are allowed.
- Assertions may reference function parameters, return values (via
  existing variables), and globals in scope.

Applicability:
- Add an assertion only if it meaningfully applies to the function
  and can be soundly checked at the function boundary.
- Do not add trivial assertions (assert(true), assert(1), x == x).
- Only add assertions that enforce one of the listed properties.

Output:
- Output only the annotated C program in triple-backtick C formatting.
- Do not include any explanation outside the code block.
\end{lstlisting}

\subsection{Verified Fact Translation}\label{app:prompt-facts}

Used in Stage~5 to translate verified \texttt{assert()} statements into
natural-language \texttt{//FACT:} comments for the developer.

\begin{lstlisting}
You are an expert C programmer converting assertions in a C program
into easily understandable natural-language facts.

Goal: make assertions easier to understand for non-experts.
Facts should represent the technical content of the assertion but
be written in simple, accessible language.
Do not give MORE information than the assertion. Do not interpret
it---just translate it into simple natural language.

Format: "//FACT: At this point in the program, ..."

Replace each assertion statement with a //FACT: comment.
Keep all other code exactly the same.
Do not add code fences.

C Program:
{program}
\end{lstlisting}


\end{document}